\documentclass[10pt,conference]{IEEEtran}
\usepackage[utf8]{inputenc}
\usepackage{amsmath,amssymb,amsthm}
\usepackage{booktabs}
\usepackage{graphicx}
\usepackage{url}
\usepackage[hidelinks]{hyperref}
\usepackage{microtype}
\usepackage{balance}
\usepackage{multirow}

\newtheorem{proposition}{Proposition}
\newtheorem{definition}{Definition}

\begin{document}

\title{When Is a Reconciliation Break Real?\\
A Temporal Correctness Contract for Streaming Financial Reconciliation}

\author{\IEEEauthorblockN{Abhishek Sharma}
\IEEEauthorblockA{Senior Member, IEEE\\
\texttt{abhicse24@gmail.com} \quad ORCID: 0009-0007-1103-2103}}

\maketitle

\begin{abstract}
Reconciliation systems decide, continuously, whether money the ledger expected
has arrived at the bank. On an asynchronous stream that decision is taken
against partial evidence: a credit that has not appeared may be missing, or
may simply not have been posted yet. Deployed pipelines settle this with a
timeout or a watermark, and the exceptions are treated as findings. We give a
temporal correctness contract that says when such a verdict is admissible,
relative to a declared per-source lateness bound, and a checker for it that
needs no ground truth. The contract separates three things an operator
otherwise sees only as a rising break rate: the reconciler decided too early,
an input broke the bound it declared, or the money really is gone. On
workloads from a published payment simulator, the standard remedy for a
stalled streaming join, an idleness timeout, introduces contract violations on
5.42\% of obligations (95\% CI [3.94, 6.92]) on traffic where the same job
without it is exactly sound, and violations vanish once the timeout reaches
the slow source's posting period, as predicted. Under a lateness shift, all
four identity-based reconcilers we test raise spurious breaks on between
18.3\% and 23.0\% of the credits the bank did settle, while their contract
violation rate stays at or below 1.01\%: the checker attributes the loss to
240 bank-side and 413 ledger-side bound violations rather than to the
reconcilers. Tracking one horizon per source leaves break latency 12.7\% lower
on average and halves its tail (p99 224 against 446 ticks) at equal soundness,
though it is not cheaper at every cost setting and we report where it loses.
An Apache Flink interval join, written the ordinary way, violates the contract
on 14.9--20.1\% of obligations. The contract is expressible as SQL over the
tables a reconciliation system already keeps, agreeing with the reference
implementation on all 4{,}395 verdicts.
\end{abstract}

\begin{IEEEkeywords}
stream processing, financial reconciliation, temporal semantics, out-of-order
events, watermarks, data quality
\end{IEEEkeywords}

\section{Introduction}

A payment settles, and two systems find out at different times. The ledger
learns within seconds, because the processor told it. The bank feed posts on a
cycle, so the same fact can take an hour to appear. In between, a reconciliation system looking for the credit sees nothing, and has to decide
what nothing means.

Most of the time it means the batch has not run. Occasionally it means the money
is gone. The two are indistinguishable from the evidence available at the
moment the question is asked, and the distinction is the whole job.

What deployed systems do is pick a number. Wait ninety minutes, then raise an
exception; or set a watermark and let the stream processor decide when the
window has closed. Both are defensible engineering and both are, as
correctness criteria, unstated: the number encodes an assumption about how late
the bank can be, that assumption is nowhere written down, and when the break
rate doubles there is no way to tell whether the reconciler became wrong or the
bank became slow. The exception queue grows either way, and the team retunes the
timeout either way.

This paper writes the assumption down and makes it checkable. The contribution
is a temporal correctness contract for reconciliation on asynchronous streams:
a statement of when a reconciler is \emph{entitled} to declare a match or a
break, given only what it has seen and a declared bound on how late each source
may be. Entitlement is not accuracy. A reconciler can be entitled and wrong,
which happens when an input violates the bound it declared, and it can be
unentitled and right, which happens whenever the traffic was benign. Those two
cases have different owners and different fixes, and telling them apart does not require knowing the answer, which is what
makes the check deployable.

Three findings follow from the contract, and one of them is a negative result
about our own construction.

The documented fix for a stalled streaming join is unsound for reconciliation.
A merged watermark takes the minimum across inputs, so a source that posts in
batches freezes the operator between batches; the remedy is an idleness timeout
that drops the quiet input from the minimum. For a reconciliation join the
quiet input is the one the decision is about. We show the resulting violations
appear exactly when the idleness timeout falls below the posting period and not
above it, on traffic where the same job is otherwise exactly sound.

Under a shift in the lateness distribution, entitlement and accuracy come
apart cleanly. Every arm we test raises spurious breaks on between 18.3\% and
23.0\% of the credits that did settle, while its contract violation rate stays
at or below 1.01\%. The checker
attributes the degradation to the inputs, and it does so from the stream and
the declared model alone.

Tracking a horizon per source is better on latency and not uniformly better on
cost. At matched soundness it halves the tail of break latency and reduces its
standard deviation from 60.8 to 5.0 ticks, because a timeout measured from the
ledger's report inherits the ledger's own jitter into a decision that is about
the bank. At the calibrated operating point, on the other hand, it decides
earlier and is therefore more exposed when bounds are violated: over a
$5\times5\times5$ sweep of the cost constants it is cheapest on all 125 points
in calm traffic and on only 30 of 125 under faults. We report both.

\section{Background and related work}

\subsection{Monitoring temporal properties over out-of-order streams}

Basin, Klaedtke and Z\u{a}linescu give a runtime verification framework for
metric temporal logic with freeze quantifiers over streams whose events arrive
out of order, using a three-valued semantics in which a verdict may be not yet
determined~\cite{basin2019,basin2017}. Their algorithms are sound and complete,
and POL\'IMON~\cite{polimon} continues that line with a tool that reports
verdicts promptly. The machinery this paper needs is theirs: an undetermined
verdict under partial knowledge is exactly their third value.

What a general framework leaves open is the formula. For reconciliation the property turns on something neither a temporal logic nor a stream
processor represents: which \emph{source} an absence of evidence is about. The safety
property below is not ``a match eventually occurs''. It is that silence from the
bank carries no information until the bank's own completeness horizon has
passed, and that horizon is a property of the bank rather than of the stream the
reconciler is reading.

\subsection{Reconciliation benchmarks}

FinBalance~\cite{finbalance} assembles multi-document accounting bundles with
labelled inconsistencies; FinRCA-Bench~\cite{finrca} builds
accounts-payable-to-bank cases and evaluates evidence retrieval against typed
provenance graphs. Both present a model with a complete evidence set and ask for
the right answer, and both are well suited to that question. Neither has a time
axis along which evidence is still arriving, so neither can express a verdict
that was correct but premature, which is the object of study here.

\subsection{Temporal stream models and speculation}

The CEDR line of work separates valid time, occurrence time and system time and
defines consistency levels over speculative output with
retraction~\cite{barga2007}. Retraction is property M1 below and is orthogonal
to the safety properties: a system that may retract still needs to know whether
its first verdict was justified, and in reconciliation retraction is often
unavailable because a notice has already gone out.

\subsection{Industrial practice}

Reconciliation over unbounded streams has been patented in the form of injected
window and shard sentinels that delimit an unbounded stream into reconcilable
units, with producer-defined rules carried in the closing
sentinel~\cite{patent2024}. That is an implementation technique for bounding
work, complementary to a correctness property; it states no temporal condition
on when a verdict is admissible.

Practitioner writing on exception management describes the phenomenon accurately, noting that an item inside its settlement
window is pending and not a break, and recommends holding such items until the
window expires. That instinct is
the safety property below in informal clothes. What is missing is a statement of
the property, a way to check a pipeline against it, and a measurement.

\subsection{Payment agents and effect-level evaluation}

FinalityBench~\cite{finalitybench} evaluates agents that act on payment
outcomes before those outcomes are determined, and supplies the simulator,
fault engine and delivery streams used here. Its concern and this one are the
same mistake at opposite ends of the pipeline: an agent shipping goods against
a capture that has not settled, and a reconciler raising a break against a
credit that has not posted. CausalLoss-Fin~\cite{causalloss} attributes
realised loss to infrastructure or policy after the fact using counterfactuals;
that analysis needs ground truth, where the decomposition here does not.
Operation identity, what makes two records the same money movement, follows
the provider-aware contract of~\cite{t8}.

\section{The contract}

\subsection{Setting}

Two evidence sources publish records derived from the same underlying money
movements. The \emph{ledger} reports what the business believes it is owed; the
\emph{bank} reports what actually moved. Write $\mathit{ev}(r)$ for the instant
the movement described by record $r$ occurred and $\mathit{at}(r)$ for the
instant its source published it. The two are never collapsed: every property
below relates them.

\begin{definition}[Source model]
A source model $M$ assigns to each source $s$ a lateness bound $\Lambda_s$,
meaning $\mathit{at}(r) - \mathit{ev}(r) \le \Lambda_s$ for every record $r$
from $s$, and a settlement window $W$ during which a counterpart movement may
legitimately occur. The \emph{horizon} of $s$ at time $t$ is
$H_s(t) = t - \Lambda_s$: every event of $s$ with event time at or below
$H_s(t)$ has arrived by $t$, if it was ever going to.
\end{definition}

A source model is a declaration, not a fact. An input that exceeds its bound has
broken the interface, and the contract is relative to the declaration. We return to that, because it
is what makes the check useful instead of merely strict.

An \emph{obligation} $o$ is a money movement the ledger has reported, with
identity $(\text{case}, \text{reference}, \text{kind})$ following~\cite{t8} and
event time $\mathit{ev}(o)$. At each instant a reconciler assigns every
obligation one of \textsc{pending}, \textsc{matched} or \textsc{exception}. Write
\begin{equation}
D(o) = \mathit{ev}(o) + W + \Lambda_{\mathrm{bank}}
\label{eq:decidable}
\end{equation}
for the first instant at which the bank's horizon has passed the last moment a
counterpart could have occurred.

\subsection{Four properties}

\paragraph{S1, no premature break} A reconciler may assign \textsc{exception} to
$o$ at time $t$ only if $t \ge D(o)$. Before that instant, silence from the bank
is consistent with a continuation in which the credit posts, so an exception
asserts something the evidence does not support.

\paragraph{S2, no premature match} A reconciler may assign \textsc{matched} to
$o$ at $t$ only if no admissible continuation makes that pairing wrong. A
pairing justified by identity satisfies this on arrival: nothing that arrives
later makes a record describe a different movement. A pairing justified by attributes, an amount and a nearby date, does not, because a later record
may turn out to be the real counterpart, and is admissible only once $t \ge
D(o)$.

\paragraph{L1, decisiveness} Every obligation receives a terminal verdict within
a bounded delay of $D(o)$. Without it, never deciding would satisfy both safety
properties.

\paragraph{M1, stability} A terminal verdict is not retracted.

\subsection{A worked instance}

Take the simulator's own numbers. A capture settles at tick 1000. The ledger
publishes at 1002, so an obligation exists from then. The bank posts on a
sixty-tick cycle, no sooner than fifteen ticks after the money moved, so
without knowing where in that cycle the movement fell the credit can appear anywhere in $[1015, 1074]$: as prompt as the lag, or one
tick short of a further full cycle. That range is what a reconciler may assume, and it gives
$\Lambda_{\mathrm{bank}} = 75$. With $W = 0$, \eqref{eq:decidable} gives
$D(o) = 1075$.

This movement in fact posts at 1020, twenty ticks after settling, and the
distinction matters more than it looks. The contract is stated over what a
reconciler can know, and where in the batch cycle a particular movement landed
is not among it. A reconciler that happened to be right about 1020 would have
been guessing.

A reconciler that raises a break at 1060 is wrong to do so even if the credit
never comes: at 1060 the declared bound had not expired, and a continuation
exists in which the credit posts at 1070. The verdict may turn out correct, and
it was not justified. That is the distinction the contract makes, and the reason
it can be checked without knowing the answer.

Now suppose the ledger itself was late and published at 1200. A timeout of 80 ticks measured from publication fires at 1280, which is 205
ticks after the decision became sound. The same timeout on a promptly reported movement fires at 1082.
Same bound, same bank, decisions 198 ticks apart, and the difference is entirely
attributable to a source that has nothing to do with the question being asked.

\subsection{Two consequences}

The first proposition explains a variance result rather than a mean one, and is
the reason a per-source horizon is worth the complexity.

\begin{proposition}[Ledger-lag inheritance]
\label{prop:inherit}
Let a reconciler raise an exception for $o$ at $\mathit{raised}(o) + T$, where
$\mathit{raised}(o)$ is the arrival instant of the ledger record and $T$ is a
fixed timeout. Its closure delay $\mathit{raised}(o) + T - D(o)$ equals
$\lambda(o) + T - W - \Lambda_{\mathrm{bank}}$, where
$\lambda(o) = \mathit{at}(o) - \mathit{ev}(o)$ is the realised ledger lag. The
closure delay therefore has the variance of the ledger lag distribution, for
every $T$. A reconciler that decides at $\mathit{ev}(o) + B$ for a declared
bound $B$ has closure delay $B - W - \Lambda_{\mathrm{bank}}$, a constant.
\end{proposition}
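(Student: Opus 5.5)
The plan is a direct substitution argument, followed by the observation that adding constants does not change variance. We identify $\mathit{raised}(o)$ with $\mathit{at}(o)$, the instant the ledger record arrives. We then substitute the definition \eqref{eq:decidable} of $D(o)$:
\begin{equation*}
\mathit{raised}(o) + T - D(o) = \mathit{at}(o) + T - \mathit{ev}(o) - W - \Lambda_{\mathrm{bank}} .
\end{equation*}
Grouping $\mathit{at}(o) - \mathit{ev}(o)$ gives $\lambda(o) + T - W - \Lambda_{\mathrm{bank}}$, which is the first claim.

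For the variance claim, we treat obligations as drawn from the workload, so that $\lambda(o)$ is a random variable with the realised ledger lag distribution. The parameters $T$, $W$ and $\Lambda_{\mathrm{bank}}$ come from the reconciler configuration and the declared source model $M$, not from the obligation, so they are constants. The closure delay is therefore $\lambda(o)$ shifted by a constant, and $\operatorname{Var}(\lambda + c) = \operatorname{Var}(\lambda)$. This holds for every $T$: changing $T$ moves the mean of the closure delay but leaves its spread unchanged.

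For the per-source reconciler, the same substitution with decision instant $\mathit{ev}(o) + B$ gives
\begin{equation*}
\mathit{ev}(o) + B - D(o) = B - W - \Lambda_{\mathrm{bank}} .
\end{equation*}
The $\mathit{ev}(o)$ terms cancel exactly and no realised lag appears, so the closure delay is a constant with variance zero.

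The main obstacle is stating the setting precisely rather than any calculation. We must fix three things.
\begin{itemize}
\item $\mathit{raised}(o)$ means $\mathit{at}(o)$ for the ledger record underlying $o$, not the ingestion time at the reconciler. Otherwise a transport-delay term enters the expression.
\item $W$ and $\Lambda_{\mathrm{bank}}$ are declared constants per $M$, not per-obligation quantities.
\item The variance is taken over obligations, so it is the empirical or workload distribution of $\lambda$.
\end{itemize}
A further point is that the statement concerns when the exception is raised, not whether it is admissible. If $\lambda(o) + T < W + \Lambda_{\mathrm{bank}}$, the delay is negative, which is an S1 violation. The identity still holds in that case; it only means the sign of the delay is not controlled. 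We would remark on this rather than exclude the case.
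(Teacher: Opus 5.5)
Your proposal is correct and is the paper's argument: the paper's proof also substitutes $\mathit{raised}(o) = \mathit{ev}(o) + \lambda(o)$ into the definition of $D(o)$, and notes that $\mathit{ev}(o)$ cancels in the per-source case because it is carried in the record. Your added points are sound elaborations of what the paper leaves implicit: that $T$, $W$ and $\Lambda_{\mathrm{bank}}$ are constants so the variance is unchanged, and that the delay can be negative, which is an S1 violation.
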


\begin{proof}
Immediate from $\mathit{raised}(o) = \mathit{ev}(o) + \lambda(o)$ and
\eqref{eq:decidable}. The second statement follows because $\mathit{ev}(o)$ is
carried in the record and does not depend on when it arrived.
\end{proof}

Raising the timeout shifts the whole distribution and does not narrow it. An
operator who wants a predictable exception queue cannot get one by tuning $T$.

\begin{proposition}[Idleness and soundness]
\label{prop:idle}
Let the bank publish in batches of period $P$, so it emits nothing for up to $P$
between batches, and let a reconciler decide from a merged watermark
$\min_s (\max \{\mathit{ev}(r) : r \text{ from } s \text{ seen}\}) - \delta$
with an idleness timeout $\tau$ that removes from the minimum any source silent
for longer than $\tau$. If $\tau < P$ there are inter-batch gaps during which
the bank is marked idle although it is behaving normally; the watermark is then
determined by the remaining sources and can exceed $H_{\mathrm{bank}}(t)$, so
S1 can be violated. If $\tau \ge P$ the bank is never marked idle merely for
being between batches.
\end{proposition}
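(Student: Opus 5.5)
The plan is to treat the two halves of Proposition~\ref{prop:idle} separately. The first half is an existence claim: there are normal-traffic executions in which S1 fails. I will prove it with an explicit construction. The second half is a universal claim: no normal inter-batch gap causes the bank to be marked idle. I will prove it by comparing each silence interval against $P$. Throughout I fix the semantics of the merged watermark. At time $t$ let $\mathcal{A}(t)$ be the set of sources not silent for longer than $\tau$. The watermark is then $\min_{s\in\mathcal{A}(t)} m_s(t) - \delta$, where $m_s(t)$ is the maximum event time seen from $s$. The reconciler raises \textsc{exception} for $o$ once this value reaches $\mathit{ev}(o)+W$. This is the ordinary window-close rule, and it is the natural reading of ``decide from a merged watermark''.

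For the case $\tau<P$, I will build a scenario in which the bank behaves exactly as declared. A bank batch lands at time $b$, and the next one lands at $b+P$. During $(b+\tau, b+P)$ the bank has been silent for longer than $\tau$, so $\mathcal{A}(t)$ contains only the ledger. The watermark therefore becomes $m_{\mathrm{ledger}}(t)-\delta$. The ledger reports promptly, so $m_{\mathrm{ledger}}(t)\approx t-\lambda$ with small $\lambda$. Now take an obligation $o$ whose credit is due in the batch at $b+P$. Its event time satisfies $\mathit{ev}(o)+W \le t-\lambda-\delta$ for some $t$ in the gap, but $t < D(o)=\mathit{ev}(o)+W+\Lambda_{\mathrm{bank}}$. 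Such an $o$ exists whenever $\lambda+\delta<\Lambda_{\mathrm{bank}}$, and the worked instance supplies numbers of that kind ($\Lambda_{\mathrm{bank}}=75$, $P=60$). At that time the exception fires, so S1 fails. The credit then arrives at $b+P$, which is within $\Lambda_{\mathrm{bank}}$ of its event time, so no declared bound is broken. The same inequality gives the statement ``can exceed $H_{\mathrm{bank}}(t)$'': the watermark $t-\lambda-\delta$ is greater than $t-\Lambda_{\mathrm{bank}}$.

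For the case $\tau\ge P$, I only need the batch model itself. Under that model the bank emits nothing for at most $P$ between batches, so its silence duration never exceeds $P\le\tau$. Hence the bank is never removed from $\mathcal{A}(t)$ because of a normal gap. I will be careful to state only what the proposition states. This half does not claim that S1 holds for every $\delta$. Soundness additionally requires $m_{\mathrm{bank}}(t)-\delta\le H_{\mathrm{bank}}(t)$, and that is a separate condition on $\delta$.

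The main obstacle is pinning down the first half rigorously. The watermark can exceed $H_{\mathrm{bank}}$ only if the ledger's progress outruns the bank's horizon. That requires the quantitative condition $\lambda+\delta<\Lambda_{\mathrm{bank}}$, together with an obligation whose $\mathit{ev}$ falls in the right window. The statement says ``can'', so I will make this an explicit hypothesis of the witness construction rather than claim it unconditionally. I will also check that a boundary choice of the idleness comparison, strict versus non-strict ``longer than $\tau$'', does not flip the case $\tau=P$.
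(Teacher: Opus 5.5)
Your proposal is correct and follows the paper's argument: in an inter-batch gap longer than $\tau$ the bank drops out of the minimum, the watermark tracks the ledger's progress, and an unmatched obligation whose credit is still due in the next batch receives an exception before $D(o)$. Your version is sharper than the paper's sketch in three ways: it states the condition $\lambda+\delta<\Lambda_{\mathrm{bank}}$ explicitly, where the paper says only that there is ``no bound tying it to $\Lambda_{\mathrm{bank}}$''; it exhibits a concrete witness that breaks no declared bound; and it spells out the $\tau\ge P$ half and the strict-inequality boundary at $\tau=P$, which the paper treats as evident.
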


\begin{proof}
A gap of length $g$ with $\tau < g \le P$ occurs whenever consecutive batches
are that far apart. During such a gap the bank contributes nothing to the
minimum, which is then $\max_s$ over the remaining sources less $\delta$. The
ledger publishes continuously and with smaller lateness, so its maximum event
time advances during the gap and the merged watermark advances with it, without
any bound tying it to $\Lambda_{\mathrm{bank}}$. Any obligation whose event time
falls below the advanced watermark is assigned an exception at a $t$ that need
not satisfy $t \ge D(o)$.
\end{proof}

Proposition~\ref{prop:idle} makes a specific, falsifiable prediction: violations
appear below $\tau = P$ and not above it. Section~\ref{sec:idleness} tests it,
and Section~\ref{sec:flink} tests it on Apache Flink rather than on our model
of one.

\section{Checking the contract}

\subsection{Two axes, kept apart}

The checker scores each terminal verdict twice. \emph{Admissibility} asks
whether the verdict satisfied S1 and S2 at the instant it was emitted, and is computed from the verdict, the obligation and the source model, with no
ground truth. \emph{Outcome} asks whether it was right, and needs the oracle.

Neither implies the other, and the gap is the point. A reconciler that raises a
break the moment a credit is overdue is inadmissible, and on a stream where
nothing was ever late it is also correct every time; its accuracy is a property
of the traffic. A reconciler can equally be admissible and wrong, which happens
when a source exceeds its declared bound. The checker reports those separately
and does not charge them to the reconciler, because a contract relative to a
source model that blamed the reconciler for a broken model would be
unfalsifiable.

\subsection{The contract as a query}
\label{sec:sql}

Reconciliation data is already in a database, so the contract is stated twice:
once as the predicates above, and once as SQL over three tables: arrivals, obligations and verdicts. Admissibility is a join and a comparison against
\eqref{eq:decidable}; the source-model check is an aggregate over arrivals. No
window function or recursive query is needed, which is part of the argument that
the check is cheap enough to run continuously.

M1 is deliberately not in the query. It constrains a history rather than a row,
and a per-verdict join would misstate it; retractions are counted separately.

\section{Reconcilers}

Five arms, four of them drawn from practice.

\textbf{Fixed timeout} raises a break when no counterpart arrives within $T$ of
the obligation being raised. It is what an operations team writes first, and by
Proposition~\ref{prop:inherit} its closure delay carries the ledger's jitter.

\textbf{Merged watermark} is the textbook interval join: a per-source watermark
of $\max \mathit{ev} - \delta$, and the operator takes the minimum. Sound in
ordinary configurations, and not timely, because the bank's watermark only moves
at a batch boundary.

\textbf{Merged watermark with idleness} is that job after someone fixes the
stall, with the idleness timeout of Proposition~\ref{prop:idle}.

\textbf{Amount/date matching} pairs on equal amount and proximity in time,
modelling deployments with no shared reference. Its failure is ambiguity rather
than lateness.

\textbf{Typed horizon} is the contract's reference implementation. It declares a bound per source and matches on identity, so a match is asserted
on arrival, and it waits for the \emph{bank's} horizon specifically before
raising a break.
Duplicate records are dropped by identity rather than matched again.

\section{Experimental setup}

Workloads are built from FinalityBench~\cite{finalitybench}, which supplies a
canonical event log per case, a fault engine over deliveries, and per-system
delivery streams. Its ledger trails a settlement by two ticks; its bank posts on
a 60-tick cycle no sooner than 15 ticks after the money moves, so a fault-free
fact can wait 75 ticks. That heterogeneity is the research problem and is taken
from the simulator rather than imposed.

Each regime uses five seeds of 200 interleaved cases spread over a 1440-tick
day: 1{,}733 arrivals and 885 obligations in calm traffic, 1{,}684 and 593 under
faults. Genuine breaks, movements whose bank evidence is removed outright, are
injected in 45 cases, drawn before any fault is applied, because without a
positive class a reconciler that never raises a break would score perfectly.

Injection fixes the class but does not hold it constant, and the difference
matters for how rates are reported. In calm traffic those 45 cases yield 37
graded breaks; under faults there are 210, because delivery loss destroys
further bank evidence and a credit that was dropped in transit is a real break.
Spurious breaks are therefore reported throughout as a share of the obligations the bank \emph{did} settle, 848 of them in calm
traffic and 383 under faults, instead of all obligations, so that the figure does not move merely
because the genuine-break class grew.

Every arm is calibrated on calm traffic by one rule: the smallest parameter with
no contract violations and no spurious breaks. Nobody retunes against the regime
they are scored on. Confidence intervals are percentile bootstrap over
\emph{cases}, not obligations, because obligations from one case share its
amount, archetype and fault draw; comparisons use a shared case resample.

\begin{table}[t]
\caption{Reporting lag by source, in ticks, pooled over five seeds. The gap
between the two sources is the reason a single horizon cannot serve both.}
\label{tab:lag}
\centering
\footnotesize
\begin{tabular}{llrrrr}
\toprule
& source & min & median & p99 & max \\
\midrule
\multirow{2}{*}{calm}   & ledger & 2 & 2.0 & 2.0 & 2 \\
                        & bank   & 15 & 40.0 & 73.6 & 74 \\
\midrule
\multirow{2}{*}{shifted}& ledger & 2 & 2.8 & 234.0 & 265 \\
                        & bank   & 15 & 58.0 & 283.4 & 326 \\
\bottomrule
\end{tabular}
\end{table}

Table~\ref{tab:lag} gives the lag structure. In calm traffic the ledger is
exactly two ticks behind and the bank's worst case is 74, inside the declared
bound of 75; the calibration procedure selects 80 for the arms that take a
bound directly. Under faults both tails run far past the declaration, the ledger to 265 and
the bank to 326. That is the shift the later experiments measure against.

The event store is PostgreSQL 16.4 and the stream processor is Apache Flink
1.20, both in Docker and both pinned.

\section{Results}

\subsection{What supports what}
\label{sec:evidence}

Three kinds of evidence appear below and they carry different weight, so it is
worth saying once which conclusion rests on which.

\textbf{Propositions} are proved from the definitions and hold for any system
matching them. Proposition~\ref{prop:inherit} and Proposition~\ref{prop:idle}
are of this kind, and no experiment can confirm or refute them; the
experiments show that our implementations behave as the propositions say,
which is a check on the implementations.

\textbf{Simulator results} are everything graded over reconcilers written for
this paper on workloads from the published generator. Precision, closure
delay, the frontier and the fault breakdown are all of this kind. They compare
constructions against each other on identical streams, which is a like-for-
like measurement, and they say nothing about any deployed system.

\textbf{The engine experiment} is Apache Flink 1.20 running the SQL interval
join, with its own verdicts graded by the same checker
(Section~\ref{sec:flink}). It is the only arm in which a real stream processor
makes a decision.

One boundary between the second and third matters enough to state before the
numbers rather than after. Our watermark arms agree with Flink on roughly four
obligations in five, and Section~\ref{sec:flink} gives the reason and the
figures. The timing results in the simulator are therefore properties of our
models. They are not measurements of Flink, and the Flink arm is not a
validation of them.

A second boundary concerns Proposition~\ref{prop:idle}. It is proved, and the
simulator reproduces its prediction. The engine arm does \emph{not} test it:
Flink's idleness timeout is measured in processing time and a bounded file
source is consumed in milliseconds, so the setting never engages, which is why
its four values give identical rows in Table~\ref{tab:flink}. Nothing in the
engine arm confirms the proposition and nothing in it contradicts the
proposition.

Intervals throughout are percentile bootstraps resampled over \emph{cases}
rather than obligations. Several obligations come from one sampled case and
share its amount, archetype and fault draw, so resampling obligations would
treat correlated evidence as independent and give intervals that are too
narrow. Comparisons between arms use the same case resample for both.

\subsection{Calibrated operating points}

\begin{table}[t]
\caption{Each arm calibrated on calm traffic, then run unchanged on both
regimes. ``f.break\%'' is spurious breaks as a share of the credits the bank
settled; ``exc.'' is the number of exceptions the closure statistics are
computed over. Amount/date matching never calibrates and is shown at the
largest parameter tried.}
\label{tab:main}
\centering
\footnotesize
\begin{tabular}{lrrrrrr}
\toprule
& viol.\% & f.break\% & f.match\% & exc. & mean & p99 \\
\midrule
\multicolumn{7}{l}{\emph{calm} --- 885 obligations, 848 settled, no bound violations} \\
typed horizon      & 0.00 & 0.00 & 0.00 & 37 & 2.1 & 4 \\
fixed timeout      & 0.00 & 0.00 & 0.00 & 37 & 3.8 & 6 \\
merged watermark   & 0.00 & 0.00 & 0.00 & 37 & 32.7 & 223 \\
\quad + idleness   & 0.00 & 0.00 & 0.00 & 37 & 29.9 & 201 \\
amount/date        & 94.46 & 2.12 & 22.82 & 37 & 421.5 & 426 \\
\midrule
\multicolumn{7}{l}{\emph{shifted} --- 593 obligations, 383 settled; 240 bank and} \\
\multicolumn{7}{l}{413 ledger bound violations} \\
typed horizon      & 0.17 & 22.98 & 0.00 & 298 & 12.9 & 146 \\
fixed timeout      & 0.00 & 20.89 & 0.00 & 290 & 28.9 & 226 \\
merged watermark   & 0.51 & 18.28 & 0.00 & 280 & 53.0 & 213 \\
\quad + idleness   & 1.01 & 18.80 & 0.00 & 282 & 45.5 & 160 \\
amount/date        & 47.05 & 8.36 & 32.72 & 141 & 458.5 & 637 \\
\bottomrule
\end{tabular}
\end{table}

Table~\ref{tab:main} gives the operating points. On calm traffic the four
identity-based arms are all exactly sound, and they differ by more than an order
of magnitude in how quickly a genuine break surfaces: 2.1 ticks for the typed
horizon against 32.7 for the merged watermark, and a tail of 4 ticks against
223. Only 37 exceptions arise in calm traffic, every one of them a genuine break,
so the calm percentile is close to a maximum and should be read as
one; the shifted regime, where the statistics rest on roughly 290 exceptions
per arm, is the one that supports a percentile claim. The watermark arms are not slow because they are cautious. They are slow
because the bank's watermark only advances at a batch boundary, so a break
discovered at the start of a cycle waits for the end of it.

Amount/date matching never calibrates. There is no parameter at which it is both
sound and accurate, because its failure is not a matter of waiting: with eight
distinct retail amounts in circulation, two movements of the same size in flight
together are indistinguishable, and it pairs the wrong records on 22.82\% of
obligations in calm traffic. This is the sense in which matching rules are downstream
of the temporal contract. No amount of tuning the decision deadline repairs a
pairing that was never justified.

\subsection{Soundness against timeliness}
\label{sec:frontier}

\begin{figure}[t]
\centering
\includegraphics[width=\columnwidth]{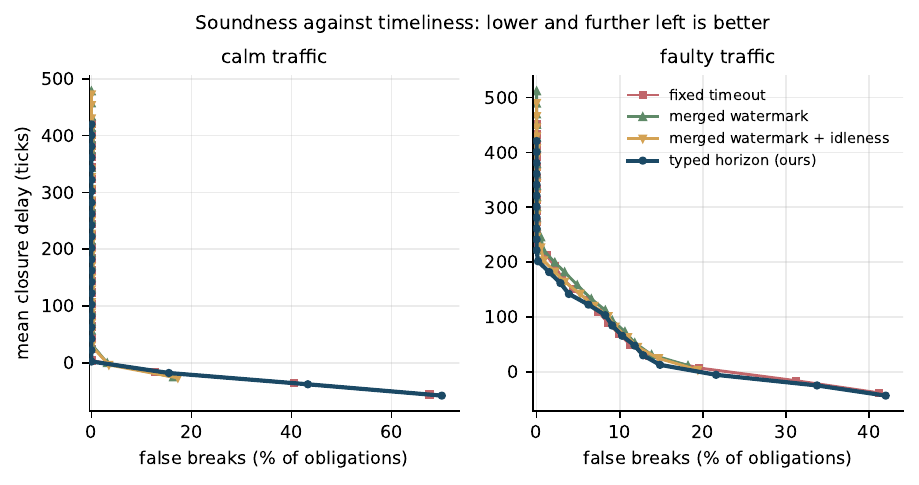}
\caption{Sweeping each family's parameter. Down and to the left is better.}
\label{fig:frontier}
\end{figure}

\begin{table}[t]
\caption{Cheapest point on each curve with no spurious breaks, shifted traffic.
Brackets are 95\% bootstrap intervals over cases.}
\label{tab:frontier}
\centering
\footnotesize
\begin{tabular}{lrrr}
\toprule
& mean closure & p99 & s.d. \\
\midrule
typed horizon     & 221.6 [220.8, 222.1] & 224 & 5.0 \\
\quad + idleness  & 245.8 [240.6, 251.4] & 392 & 40.4 \\
fixed timeout     & 253.8 [245.7, 262.5] & 446 & 60.8 \\
merged watermark  & 264.9 [255.8, 274.2] & 554 & 69.8 \\
\bottomrule
\end{tabular}
\end{table}

Comparing arms at their calibrated parameters is what an operator would deploy
but is not a controlled comparison. Figure~\ref{fig:frontier} sweeps each
family; Table~\ref{tab:frontier} reads off the cheapest point at which no
spurious break is raised, which is the operating point reconciliation teams
actually ask for.

\begin{figure}[t]
\centering
\includegraphics[width=0.86\columnwidth]{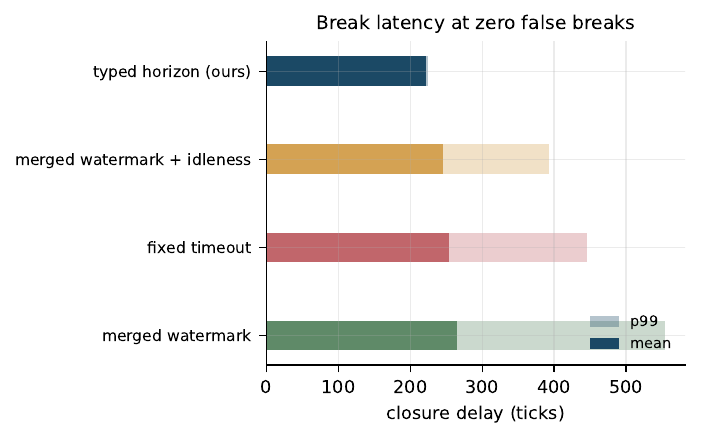}
\caption{Mean and 99th-percentile break latency at the zero-false-break
operating point, shifted traffic. The bars for the typed horizon nearly
coincide; for the others they do not.}
\label{fig:tail}
\end{figure}

The typed horizon reaches that point 12.7\% sooner than a fixed timeout, with
non-overlapping intervals, and its tail is halved.
Figure~\ref{fig:tail} shows why that matters operationally: for the typed
horizon the mean and the 99th percentile are almost the same number, so the
exception queue drains predictably, while for the others the worst decile
arrives at roughly twice the average delay. The standard deviation is the
clearest number: 5.0 ticks against 60.8 and 69.8. That is
Proposition~\ref{prop:inherit} showing up in the data. The other arms inherit
the jitter of the ledger's reporting into a decision that is about the bank,
and raising their parameter moves the distribution without narrowing it.

\subsection{What happens when the distribution moves}

\begin{figure}[t]
\centering
\includegraphics[width=\columnwidth]{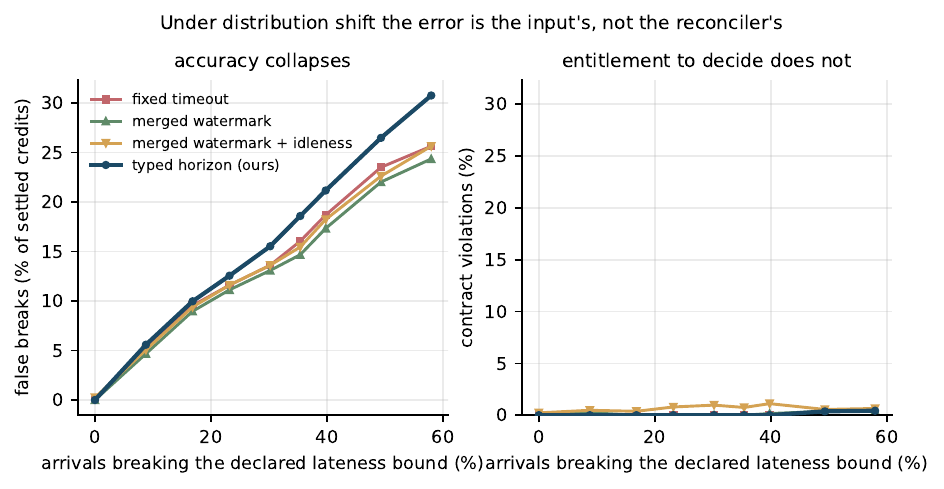}
\caption{As fault intensity rises, accuracy collapses and entitlement does not.}
\label{fig:shift}
\end{figure}

Figure~\ref{fig:shift} sweeps fault intensity from zero to 0.5 on seeds disjoint
from the calibration set. The share of arrivals breaking their declared bound
rises from 0\% to 57.9\%, and spurious breaks rise with it, reaching 25.6\% of
settled credits for the fixed timeout and 30.8\% for the typed horizon.

The contract violation rate barely moves. Across the whole sweep it stays at or
below 1.11\% for every arm, and at or below 0.43\% for every arm that does not use an idleness timeout. The exception is the one
construction we already know breaks S1. These reconcilers were, with that caveat, entitled to the
verdicts they issued, and they were wrong anyway, because the inputs stopped honouring the bounds
against which they were calibrated. The checker
reaches that conclusion from the arrivals and the declared model, without
consulting the oracle, which is what makes it usable on the morning the break
rate triples.

This is the practical content of separating the two axes. An operator watching
only the break rate sees a reconciler that has become unreliable and retunes it.
The retuning cannot help: no timeout repairs a bank that has started posting
late, and lengthening it trades a real detection delay for an apparent
improvement.

\subsection{The idleness trap}
\label{sec:idleness}

\begin{figure}[t]
\centering
\includegraphics[width=0.86\columnwidth]{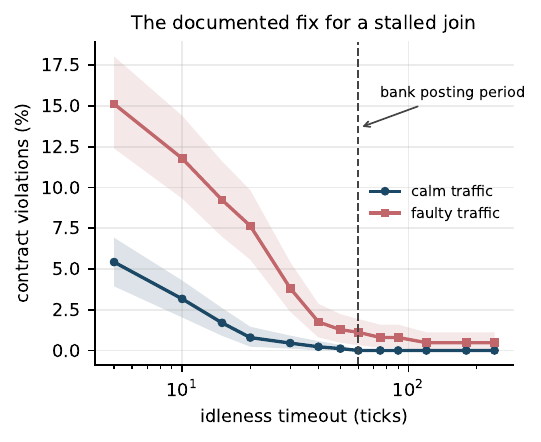}
\caption{Contract violations against the idleness timeout, with the bank's
posting period marked.}
\label{fig:idle}
\end{figure}

With idleness disabled the merged watermark commits no contract violations on
calm traffic at all. Turning it on, at a timeout of 5 ticks, produces violations
on 5.42\% of obligations (95\% CI [3.94, 6.92]) and spurious breaks on 3.05\% ---
on traffic in which no source is ever late and every credit arrives.

Figure~\ref{fig:idle} shows the cliff. Violations fall monotonically as the
timeout rises --- 0.45\% at 30 ticks, 0.11\% at 50 --- and reach exactly zero at
60, the bank's posting period, remaining there above it. That is the boundary
Proposition~\ref{prop:idle} predicts, and it was predicted before it was
measured.

The trap is that the correct setting defeats the purpose. Idleness is reached
for because the join stalls between batches; the smallest timeout that keeps the
join sound is the batch period, which is precisely the gap the operator was
trying to stop waiting through.

\subsection{Which faults matter}

Enabling one fault family at a time is unusually decisive here. Of the six
delivery-fault families in the simulator, only two touch reconciliation
correctness at all. Delay produces 462 bound violations and drives spurious
breaks to 22.9\% for the typed horizon. Duplication produces 430 bound
violations and, for identity-based arms, no errors whatsoever, because a repeated delivery of the same movement is recognised and dropped.
Amount/date matching mispairs 36.4\% of its verdicts under the same faults.
Reordering, partial commits and stale reads leave the ledger-to-bank path
untouched, with results identical to the fault-free arm to three digits.

\begin{table}[t]
\caption{One fault family at a time, at intensity 0.30. ``bound viol.'' counts
arrivals exceeding their declared lateness. Only two families reach the
ledger-to-bank path at all.}
\label{tab:families}
\centering
\footnotesize
\begin{tabular}{lrrrr}
\toprule
& & \multicolumn{2}{c}{typed horizon} & amount/date \\
\cmidrule(lr){3-4}\cmidrule(lr){5-5}
family & bound viol. & f.break\% & f.match\% & f.match\% \\
\midrule
none            &   0 &  0.0 & 0.0 & 24.0 \\
delay           & 462 & 22.9 & 0.0 & 35.9 \\
duplicate       & 430 &  0.0 & 0.0 & 36.4 \\
loss            &   0 &  0.0 & 0.0 & 28.9 \\
reorder         &   0 &  0.0 & 0.0 & 24.0 \\
partial commit  &   0 &  0.0 & 0.0 & 24.0 \\
stale read      &   0 &  0.0 & 0.0 & 24.0 \\
\bottomrule
\end{tabular}
\end{table}

Table~\ref{tab:families} gives the breakdown.
Reconciliation correctness has a narrow failure surface: lateness and
repetition. Lateness is a safety problem for everyone; repetition is a safety
problem only for reconcilers that cannot tell two records apart.

\subsection{Cost, and where our construction loses}

The monetary figures rest on three constants argued rather than measured, so
each is swept over an order of magnitude, giving 125 combinations per regime. On
calm traffic the typed horizon is cheapest at all 125. Under shifted traffic it
is cheapest at 30, and the merged watermark is cheapest at 75.

\begin{table}[t]
\caption{Cheapest arm over a $5\times5\times5$ sweep of the three cost
constants, 125 combinations per regime. The reversal between regimes is the
result, not the dollar figures.}
\label{tab:cost}
\centering
\footnotesize
\begin{tabular}{lrr}
\toprule
cheapest at & calm & shifted \\
\midrule
typed horizon    & 125 & 30 \\
merged watermark &   0 & 75 \\
fixed timeout    &   0 & 20 \\
\quad + idleness &   0 &  0 \\
amount/date      &   0 &  0 \\
\bottomrule
\end{tabular}
\end{table}

That reversal, in Table~\ref{tab:cost}, is real and worth stating plainly. At its calibrated parameter the
typed horizon decides at the earliest instant its declared bound permits, which is exactly what makes it fast, and exactly what exposes it when the
bound turns out to be wrong. The merged watermark's extra conservatism, which costs it
an order of magnitude in latency on calm traffic, is what pays under shift. At
matched soundness (Table~\ref{tab:frontier}) the ordering reverses again.

The contract does not make a reconciler cheap. It makes the trade explicit, and
it makes the residual attributable.

\subsection{Checking the contract in SQL}

Across five arms, three seeds and both regimes, the SQL formulation of
Section~\ref{sec:sql} and the reference implementation agree on all 4{,}395
verdicts compared, on both admissibility and outcome, with no disagreements.
The check is a join and a comparison, so a reconciliation warehouse can run it
against yesterday's verdicts without new infrastructure.

\subsection{Against a real stream processor}
\label{sec:flink}

Everything so far grades reconcilers written for this paper. The same
workloads were also replayed through Apache Flink 1.20, as an ordinary LEFT interval join between the two streams, the shape the
pattern is usually built as, with the engine's own verdicts graded by the same
checker. The decision
instant is recovered from \texttt{CURRENT\_WATERMARK}, which the job selects
alongside each row, mapped back to the wall-clock instant that watermark could
first have been reached.

\begin{table}[t]
\caption{Apache Flink 1.20, SQL interval join, shifted traffic. Each row
averages eight runs, four idleness settings by two seeds; the idleness timeout
changes nothing here, for the reason given in the text, so the spread is the
engine disagreeing with itself.}
\label{tab:flink}
\centering
\footnotesize
\begin{tabular}{rrlrrr}
\toprule
window & viol.\% & (range) & f.break\% & mean & p99 \\
\midrule
80  & 14.91 & 14.78--15.04 & 18.28 & 51.5 & 239 \\
160 & 18.18 & 14.16--20.87 & 16.00 & 30.1 & 176 \\
320 & 20.09 & 16.52--23.01 & 14.80 & 18.3 & 161 \\
\bottomrule
\end{tabular}
\end{table}

The engine violates S1 on 14.91\% to 20.09\% of obligations, averaged over the
eight runs of each window in Table~\ref{tab:flink}; individual runs span
14.16\% to 23.01\%. It does so in the configuration a competent engineer would
write.
Widening the join interval buys timeliness and costs soundness, which is the
frontier of Section~\ref{sec:frontier} seen from inside a real system. The
engine also emits one output row per matching pair, so a duplicated bank record produces a duplicated match, 15 to 44 extra rows
per run, which a downstream sink has to deduplicate on its own.

The window effect is solid; the individual rates are less so. Window 80 is
identical to the second decimal place on both seeds across all four idleness
settings. Windows 160 and 320 are not: at each, one seed repeats exactly while
the other moves across settings that cannot affect it, by 1.77 and 0.87 points
respectively. Nothing in the configuration distinguishes those cells, so the
spread is the engine disagreeing with itself, and which cell it lands in is not
a property of the window or the seed. The ordering across windows survives
every run we have made; a single per-configuration rate does not. What follows
is less comfortable.

\textbf{Our watermark arms are not a faithful model of the engine.} The
simulated merged watermark agrees with Flink's partition of the obligations on
77.0\% and 78.8\% of them, across the two seeds. Roughly a fifth of the time the
model and the engine reach different conclusions, and the explanation is
structural rather than statistical: the simulated arm models the \emph{decision rule}, raise a break once a watermark
has passed, while the engine's interval join also constrains which bank records are eligible to match at all, through
the event-time interval in the join predicate. Two different constructions,
agreeing most of the time. The timing results in Table~\ref{tab:frontier} are
therefore properties of the models, and we state them as such rather than as
measurements of Flink.

\textbf{The instability rate is itself a random variable.} An earlier full run
of this experiment, made while the machine was also running the other seven,
showed the engine disagreeing with itself: repeating an identical job four
times, 26.1\% of obligations on one seed and 40.7\% on the other received
different verdicts. Re-run on an otherwise idle machine, the same configuration
gave 0.9\% and 0.0\%. We withdrew the observation, and the obvious reading of
the withdrawal --- that load explained it --- is one we can now test rather than
assert. E10 varies load deliberately, running the determinism check idle and
under twelve busy-loop workers on twelve cores, six times per condition.

Load does not explain it. Across six measurements per condition the idle arm
spans 0.0\% to 11.30\% and the loaded arm 0.0\% to 7.83\%; the condition means
differ by 0.43 points, an order of magnitude inside the spread of either
condition, and in the direction opposite to the hypothesis. The distributions
overlap almost completely, and the largest idle reading exceeds every one of
the six loaded ones. The useful result is the spread, not the comparison:
repeating an identical job on an unchanged machine gives a verdict-instability
rate anywhere between zero and a tenth of all obligations, a range spanning
more than an order of magnitude within one condition, so no single run --- the
loaded one we withdrew, the idle one we replaced it with, or any of the twelve
reported here --- carries information about the engine on its own. The E7 run
shipped with this paper happens to be a quiet one, with seed~1 disagreeing with
itself on 3.48\% of obligations, and that is not evidence of stability either.
It is why Table~\ref{tab:flink} reports a range per window rather than a figure
per configuration.

\textbf{What that does and does not undermine.} The volatility is confined to
the repeat-to-repeat determinism check. The violation rates underlying
Table~\ref{tab:frontier} are averages over 200 obligations rather than counts of
disagreement between four repeats, and they hold across whole-experiment runs
made days apart. The model-fidelity figures sit between the two: reported per
seed from one execution, and E10's spread is a reason to read them as
approximate. Where this paper quotes a determinism or fidelity number we mean
\emph{this run gave this}, not \emph{the engine does this} --- so a reviewer who
runs our determinism check once and gets a number unlike ours has reproduced us
correctly.

One thing E7 could not test at all. Flink's idleness timeout is measured in
processing time, and a bounded file source is consumed in milliseconds, so no source is ever idle long enough for the setting to engage. That is why
three of its four values give identical numbers above. Proposition~\ref{prop:idle} is
supported by its proof and by the model in Section~\ref{sec:idleness}, and is
\emph{not} confirmed on the engine here. Testing it would need a source paced in
processing time, which means a broker and a producer rather than a file, and
that is left as future work.

Running the same job with \texttt{execution.runtime-mode = BATCH} produces no
violations at all, because a batch job sees its whole input and so has no
partial-evidence problem to get wrong. Offline reconciliation is sound, and the
price is that it has no answer until the batch runs. Everything in this paper is
about the systems built to avoid paying that price.

\section{Discussion}

\paragraph{A horizon is not a watermark} A watermark summarises what a stream
processor has seen. A horizon states what a source promised. They coincide when
one source is in play and diverge as soon as two sources have different
latencies, because a merged watermark answers ``how far has the input
progressed'' when the question is ``could the bank still be about to tell me''.
Most of the failures in this paper are that substitution.

\paragraph{Declaring a bound is the useful part} The typed horizon is not a
sophisticated reconciler. It waits for one number and matches on identity. Its
advantage is that the number is written down, so when it is wrong the checker
says so and names the source. Every other arm encodes the same assumption
implicitly in a timeout, where it cannot be checked and cannot be blamed.

\paragraph{The check is cheap where the data already is} Nothing in S1 or S2
requires the reconciler's cooperation. Both are predicates over a verdict log,
an obligation table and a declared model, which is why
Section~\ref{sec:sql} can state them as a join. An organisation that cannot change its reconciliation engine, the common case
since that engine is often a vendor product, can still run the check against yesterday's output
and learn whether its break queue is the engine's doing or its counterparties'.

\paragraph{Matching is downstream} A reconciliation product is usually sold on
its matching rules. The results here put that second: attribute matching cannot
satisfy S2 at any deadline, and identity matching satisfies it immediately. The
hard part is not deciding which records correspond. It is deciding when you are
allowed to conclude that none does.

\section{Limitations}

Two sources, not many. Reconciling three or more generalises the contract, with the horizon becoming
a vector, but is not tested, and multi-way reconciliation may
raise ambiguities that two-way does not.

Bounds are declared, not learned. Estimating $\Lambda_s$ online, and stating a
confidence rather than a guarantee, is the obvious next step and would turn S1
into a probabilistic property. We chose the bounded form because it is checkable
exactly.

Obligations are ledger-side. A bank credit with no ledger expectation is
counted and reported but not graded, and that direction of break is a real
concern we do not measure.

The workload is synthetic throughout, built from one simulator. Its lateness structure, a fast ledger and a batching bank, is realistic in
shape, and the
contract does not depend on the particular numbers, but the rates reported here
are properties of this generator and should not be read as field rates.

The cost constants are argued, not measured. Where the ordering of arms depends
on them, as it does under shift, we report the dependence instead of a figure.

\section{Conclusion}

Reconciliation on a stream is a decision under partial evidence, and the
question of when the evidence is sufficient has been answered in deployed
systems by a constant nobody writes down. Writing it down as a per-source lateness bound, with a safety property
relating it to the instant a verdict is issued, costs little and buys a
diagnosis: contract violation, model
violation, or genuine loss, told apart without knowing the answer. The parts of
this that surprised us were that the documented remedy for a stalled streaming
join is unsound for reconciliation below the slow source's posting period, and
that under distribution shift every reconciler we tested was entitled to every
verdict it got wrong.

\section*{Artifact availability}

Code, workload generator, Docker configuration and every results file are at
\url{https://github.com/abhisheksharma2411/reconstream}, MIT licensed, tagged
\texttt{v1.0.2}, which is the revision this paper was built from. The release is archived at DOI \texttt{10.5281/zenodo.23272623}.
\texttt{make reproduce} runs the whole
pipeline from a clean checkout: it fetches FinalityBench, starts PostgreSQL and
Flink, runs every experiment, regenerates each figure from the raw results and
rebuilds this paper. No number in the text is entered by hand.

\balance

\begin{thebibliography}{99}

\bibitem{basin2019}
D.~Basin, F.~Klaedtke, and E.~Z\u{a}linescu, ``Runtime verification over
out-of-order streams,'' \emph{ACM Trans. Comput. Logic}, vol.~20, no.~3, 2019.
doi:10.1145/3355609. Preprint arXiv:1909.11593.

\bibitem{basin2017}
D.~Basin, F.~Klaedtke, and E.~Z\u{a}linescu, ``Runtime verification of temporal
properties over out-of-order data streams,'' in \emph{Computer Aided
Verification (CAV)}, LNCS 10426, Springer, 2017, pp.~356--376.

\bibitem{polimon}
F.~Klaedtke, ``POL\'IMON: Checking temporal properties over out-of-order
streams at runtime,'' arXiv:2404.15723, Apr.~2024.

\bibitem{finbalance}
S.~Tumpati, D.~Agarwal, A.~Kedia, A.~Neekhra, M.~Mandal, K.~Garg, Y.~Sinha,
S.~Gupta, and D.~Kumar, ``FinBalance: A multi-document accounting
reconciliation benchmark,'' arXiv:2606.15949, Jun.~2026.

\bibitem{finrca}
P.~Ghawate, ``FinRCA-Bench: Benchmarking evidence retrieval and reasoning for
financial AI systems,'' arXiv:2608.18534, Aug.~2026.

\bibitem{barga2007}
R.~S. Barga, J.~Goldstein, M.~Ali, and M.~Hong, ``Consistent streaming through
time: A vision for event stream processing,'' in \emph{Conf. Innovative Data
Systems Research (CIDR)}, 2007, pp.~363--374. Preprint arXiv:cs/0612115.

\bibitem{patent2024}
D.~Stewart and A.~Dwivedi, ``Reconciliation systems and methods for unbounded
streams,'' U.S. Patent 12,045,262 B2, Jul.~23, 2024.

\bibitem{finalitybench}
A.~Sharma, ``FinalityBench: An effect-level benchmark for agent decisions under
delayed and conflicting financial finality,'' arXiv:2609.04706, 2026,
\textsc{doi}: 10.5281/zenodo.22262591.

\bibitem{causalloss}
A.~Sharma, ``CausalLoss-Fin: Attributing financial-agent loss to decisions and
infrastructure faults,'' arXiv:2609.25960, 2026,
\textsc{doi}: 10.5281/zenodo.22893020.

\bibitem{t8}
A.~Sharma, ``A provider-aware formal contract and conformance suite for payment
idempotency,'' under review, 2026.

\end{thebibliography}
\end{document}